\documentclass[11pt]{article}
\usepackage[utf8]{inputenc}
\usepackage{amsmath,amsthm,amssymb}
\usepackage{fullpage}
\usepackage{hyperref}
\usepackage{algorithm}
\usepackage[noend]{algorithmic}

\usepackage{tcolorbox}

\newcommand{\ba}{\mathbf{a}}
\newcommand{\bb}{\mathbf{b}}
\newcommand{\bc}{\mathbf{c}}
\newcommand{\be}{\mathbf{e}}
\newcommand{\bl}{\boldsymbol{\ell}}
\newcommand{\btheta}{\boldsymbol{\theta}}
\newcommand{\bmf}{\mathbf{f}}
\newcommand{\bo}{\mathbf{o}}
\newcommand{\bs}{\mathbf{s}}
\newcommand{\bx}{\mathbf{x}}
\newcommand{\by}{\mathbf{y}}
\newcommand{\bz}{\mathbf{z}}
\newcommand{\bp}{\mathbf{p}}
\newcommand{\bq}{\mathbf{q}}
\newcommand{\bfzero}{\mathbf{0}}

\newcommand{\caA}{\mathcal{A}}
\newcommand{\caB}{\mathcal{B}}
\newcommand{\caC}{\mathcal{C}}
\newcommand{\caO}{\mathcal{O}}
\newcommand{\E}{\mathbf{E}}
\newcommand{\bbR}{\mathbb{R}}
\DeclareMathOperator{\dist}{dist}

\DeclareMathOperator{\proj}{proj}
\DeclareMathOperator{\supp}{supp}
\DeclareMathOperator{\regret}{regret}
\DeclareMathOperator*{\argmin}{argmin}
\DeclareMathOperator*{\argmax}{argmax}

\usepackage{mathtools}
\DeclarePairedDelimiter{\abs}{\lvert}{\rvert}
\DeclarePairedDelimiter{\norm}{\lVert}{\rVert}

\newtheorem{theorem}{Theorem}[section]
\newtheorem{corollary}[theorem]{Corollary}
\newtheorem{lemma}[theorem]{Lemma}
\theoremstyle{definition}
\newtheorem{definition}[theorem]{Definition}
\newtheorem{remark}[theorem]{Remark}

\title{\bf No-regret algorithms for online $k$-submodular maximization}
\author{Tasuku Soma\\
The University of Tokyo\\
\texttt{tasuku\_soma@mist.i.u-tokyo.ac.jp}}
\date{\today}

\begin{document}
\maketitle

\begin{abstract}
We present a polynomial time algorithm for online maximization of $k$-submodular maximization.
For online (nonmonotone) $k$-submodular maximization, our algorithm achieves a tight approximate factor in an approximate regret.
For online monotone $k$-submodular maximization, our approximate-regret matches to the best-known approximation ratio, which is tight asymptotically as $k$ tends to infinity.
Our approach is based on the Blackwell approachability theorem and online linear optimization.
\end{abstract}

\section{Introduction}
\emph{Submodular functions} have a wide veriety of applications in combinatorial optimization, economics, communication, and machine learning~\cite{Fujishige2005,Krause2014survey}.
A set function $f: 2^V \to \bbR$ on a ground set $V$ is called a \emph{submodular function} if it satisfies $f(X) + f(Y) \geq f(X \cup Y) + f(X \cap Y)$ for all $X \subseteq V$. 
Equivalently, $f$ is submodular if it satisfies the \emph{diminishing return property}: $f(X \cup \{j\}) - f(X) \geq f(Y \cup \{j\}) - f(Y)$ for all $X \subseteq Y$ and $j \in V \setminus Y$.
In the last two decades, \emph{submodular maximization} has been studied extensively in theoretical computer science~\cite{Calinescu2011,Buchbinder2015}, machine learning~\cite{Krause2014survey}, and viral marketing~\cite{Kempe2003}. 
Although submodular maximization is NP-hard in general, constant-factor approximation algorithms have been devised for various constraints~\cite{Calinescu2011,Buchbinder2015}.

Recently, the paradigm of \emph{``optimization as a process''} has been proposed in the context of \emph{online learning}~\cite{Hazan2016book,Cesa2006}.
The goal of online learning is making a better decision in the face of uncertainty.
Formally, let us consider the following repeated two-player game between a player and an adversary.
At each $t$th round ($t \in [T] := \{1,\dots, T\})$, the player must select an action $x_t \in K$ (possibly in a randomized manner). 
After the choice of $x_t$, the adversary reveals a reward function $f_t: K \to [0,1]$ in the round, and the player gains $f_t(x_t)$.
The performance metric of the player's algorithm is the \emph{regret}: 
\begin{align}
\regret(f_1, \dots, f_T) = \max_{x \in K}\sum_{t \in [T]} f_t(x) - \sum_{t \in [T]} f_t(x_t).
\end{align}
That is, the regret is the difference between the player's total gain and the gain of the best fixed action in hindsight.
A player's algorithm is said to be \emph{no regret} if the expectation of the regret is sublinear: $\E[\regret(f_1,\dots,f_T)] = o(T)$, where the expectation is taken under the randomness in the player.

\emph{Online submodular maximization} is an online learning problem in which the action set is a set family $\caC \subseteq 2^V$ and the reward functions $f_t$ are submodular functions on $V$.
Since submodular maximization is NP-hard even in the offline setting, it is reasonable to relax the definition of the regret to the \emph{$\alpha$-regret}: 
\begin{align}
\regret_\alpha(f_1, \dots, f_T) = \alpha\max_{X \in \caC}\sum_{t \in [T]} f_t(X) - \sum_{t \in [T]} f_t(X_t),
\end{align}
where $\alpha > 0$ is a constant.
Intuitively, $\alpha$ corresponds to the offline approximation ratio.
A player's algorithm is said to be \emph{no $\alpha$-regret} if $\E[\regret_\alpha(f_1, \dots, f_T)] = o(T)$.
Streeter and Golovin~\cite{Streeter2009} presented the first no $(1-1/e)$-regret algorithm for online monotone submodular maximization under a cardinality constraint ($\caC$ is the set of subsets satisfying the cardinality constraint and $f_t$ are monotone submodular functions).
Golovin, Streeter, and Krause~\cite{Golovin2014} extended this algorithm to a matroid constraint, generalizing a well-known \emph{continuous greedy algorithm}~\cite{Calinescu2011}.
Recently, Roughgarden and Wang~\cite{Roughgarden2018} proposed no $1/2$-regret algorithm for (unconstrained) online nonmonotone submodular maximization.
Their algorithm is based on the \emph{double greedy algorithm}~\cite{Buchbinder2015};
at its core, they designed an online learning algorithm with two actions with a stronger regret guarantee.

\subsection{Our contribution}

\begin{figure}[t]
    \centering
    \begin{tcolorbox}[boxrule=0pt,toprule=.5pt,bottomrule=.5pt,sharp corners]
    \begin{algorithmic}
        \FOR{$t = 1, \dots, T$}
        \STATE A player (randomly) plays $\bx_t \in (k+1)^V$.
        \STATE An adversary reveals a $k$-submodular function $f_t: (k+1)^V \to [0,1]$ to the player as a value oracle.
        \STATE The player gains reward $f_t(\bx_t)$.
        \ENDFOR
    \end{algorithmic}
    \end{tcolorbox}
    \caption{The online $k$-submodular maximization protocol.}
\end{figure}

This paper examines online maximization of \emph{$k$-submodular} functions.
$k$-submodular functions are generalizations of submodularity and bisubmodularity, introduced by Huber and Kolmogolov~\cite{Huber2012}.
Formally, $k$-submodular functions are defined on $(k+1)^V = \{0,1,\dots, k\}^V$. 
A function $f : (k+1)^V \to \bbR$ is $k$-submodular if for any $\bx, \by \in (k+1)^V$, $f(\bx) + f(\by) \geq f(\bx \sqcup \by) + f(\bx \sqcap \by)$, where $\sqcup$ and $\sqcap$ are generalized ``union'' and ``intersection'' in $(k+1)^V$, respectively (see Section~\ref{sec:pre} for the formal definition).
Indeed, if $k = 1, 2$, $k$-submodularity is equivalent to submodularity and bisubmodularity, respectively.
The concepts of bisubmodularity and $k$-submodularity have numerous applications in valued CSP, delta matroids, generalized influence maximization, and image segmentation~\cite{Huber2012,Fujishige2005,Fujishige2005b,Ohsaka2015,Hirai2017}. 

For offline $k$-submodular maximization, Iwata, Tanigawa, and Yoshida~\cite{Iwata2016} gave a $1/2$-approximation algorithm.
The approximation ratio is tight even for $k=1$, i.e., submodular maximization~\cite{Feige2011}.
They also devised a $\frac{k}{2k-1}$-approximation algorithm for \emph{monotone} $k$-submodular maximzation and the approximation ratio is asymptotically tight. 

The main results of this paper are as follows:

\begin{itemize}
    \item For online $k$-submodular maximization, we devise a polynomial-time algorithm whose expected $1/2$-regret is bounded by $O(nk\sqrt{T})$, where $n = \abs{V}$.
This result generalizes the previous algorithm of Roughgarden and Wang~\cite{Roughgarden2018} for online submodular maximization.
    \item For online monotone $k$-submodular maximization, we present a polynomial-time algorithm whose expected $\frac{k}{2k-1}$-regret is $O(nk\sqrt{T})$.
\end{itemize}

To extend the algorithm of \cite{Iwata2016} to the online setting, we must consider an auxiliary online learning problem, which we call a \emph{$k$-submodular selection game}.
We show that it is sufficient to design an online algorithm for $k$-submodular selection games with a stronger regret guarantee, which is not obtained by using a standard online learning algorithm such as multiplicative weight update~\cite{Arora2012a}.
To this end, we exploit \emph{Blackwell's approachability theorem}\footnote{The possibility of using of Blackwell's approachability theorem was mentioned in Roughgarden and Wang~\cite{Roughgarden2018} without detail in a footnote. They designed an alternative algorithm for a similar problem without using Blackwell's theorem.}~\cite{Blackwell1956} and \emph{online linear optimization (OLO)}.
The Blackwell approachability theorem is a powerful generalization of von Neumann's minimax theorem for finite two-player games. 
In the online learning literature, the Blackwell approachability theory has been exploited to demonstrate the existence of no-regret algorithms for various problems, such as online learning with the internal and generalized regret, and well-calibrated forecasters (see \cite{Cesa2006} and references therein).
We exploit the Blackwell approachability theorem to design an algorithm with the desired stronger regret guarantee.
To obtain a concrete regret bound, we use a beautiful duality result between approachability and OLO~\cite{Abernethy2011}.
More precisely, we use their framework to obtain an online algorithm for $k$-submodular selection games by converting an OLO algorithm.

To demonstrate the flexibility of our approach based on Blackwell's theorem, we show that the algorithm for the nonmonotone case can be easily modified for the monotone case with a stronger approximation ratio $\frac{2k}{2k-1}$.
Furthermore, our algorithm and analysis work even for an \emph{adaptive adversary}.
An \emph{oblivious adversary} fixes $f_t$ ($t \in [T]$) before the first round, whereas an adaptive adversary can select $f_t$ after seeing $\bx_t$.
Since our approach is conceptually simpler than previous work~\cite{Roughgarden2018}, it almost immediately extends to an adaptive adversary.

\subsection{Related work}
An important special case of $k$-submodular functions is the \emph{bisubmodular} function.
Singh, Guillory, and Bilmes~\cite{Singh12} studied maximizing a bisubmodular function\footnote{Note that they used different terminology, \emph{directed bisubmodular} functions, to describe such functions.}. 
General $k$-submodular maximization was first studied by Buchbinder and \v{Z}ivn\'{y}~\cite{Ward2016}. 
They devised a $1/(1+\sqrt{k/2})$-approximation algorithm for $k$-submodular maximization.
Iwata, Tanigawa, and Yoshida~\cite{Iwata2016} presented a randomized algorithm with an improved and tight approximation factor of $1/2$ for $k$-submodular maximization.
A derandomized version of their algorithm was developed by Oshima~\cite{Oshima2018}.
Ohsaka and Yoshida~\cite{Ohsaka2015} studied monotone $k$-submodular maximization under a cardinality constraint.
Later, Sakaue~\cite{Sakaue2017} generalized it to a matroid constraint.

Online learning of discrete structure is called \emph{online structured learning}.
Efficient online algorithms were developed for various discrete structures, such as shortest paths and matroid basis~\cite{Takimoto2003,Suehiro2012}.
Most of these studies focused on optimizing \emph{linear} reward/loss functions (under a constraint), whereas our paper studies \emph{nonlinear} functions (without constraint).

\subsection{Organization}
The reminder of this paper is organized as follows.
Section~\ref{sec:pre} introduces $k$-submodularity, Blackwell's approachability theorem, and OLO.
Section~\ref{sec:nonmonotone} describes our algorithm for online $k$-submodular maximization along with $k$-submodular selection games.
Section~\ref{sec:monotone} presents our algorithm for online monotone $k$-submodular maximization.

\section{Preliminaries}\label{sec:pre}
\subsection{Notation}
For a positive integer $n$, we denote the set $\{1, \dots, n\}$ by $[n]$.
The probability simplex in $\bbR^k$ is denoted by $\Delta_k$.
The sets of nonnegative and nonpositive reals are denoted by $\bbR_+$ and $\bbR_-$, respectively.
The Euclidian norm is denoted by $\norm{\cdot}$.
The $j$th standard unit vector is denoted by $\be_j$.
The distance between a point $\bx$ and a set $S$ is defined as $\dist(\bx, S) := \inf_{\by \in S} \norm{\bx - \by}$.
The orthogonal projection of a point $\bx$ onto a set $S$ is denoted by $\proj_S(\bx)$.

\subsection{$k$-submodular functions}
Let $k$ be a positive integer. 
Throughout the paper, let $V = [n]$ be a ground set.
Define $(k+1)^V = \{0,1,\dots, k\}^V$.
For $\bx \in (k+1)^V$, we denote $\supp(\bx) = \{ j \in V : x(j) \neq 0 \}$.
For a function $f : (k+1)^V \to \bbR$, $\bx \in (k+1)^V$, and $j \notin \supp(\bx)$, we define 
\begin{align}
    \Delta_{j,i}f(X_1, \dots, X_k) := f(\bx + i\be_j) - f(\bx),
\end{align}
where $\bx + i\be_j$ is a vector obtained by setting the $j$th entry of $\bx$ to $i$.
Since $x(j) = 0$, this is the standard addition in $\bbR^V$.
Let us define a binary operator $\sqcup$ and $\sqcap$ on $\{0, 1, \dots, k\}$ as 
\begin{align}
    i \sqcup i' &= 
    \begin{cases}
        \max\{i, i'\} & \text{if either $i = 0$, $i' = 0$ or $i = i'$} \\
        0            & \text{otherwise}
    \end{cases} \\
    i \sqcap i' &= 
    \begin{cases}
        \min\{i, i'\} & \text{if either $i = 0$, $i' = 0$ or $i = i'$} \\
        0            & \text{otherwise}
    \end{cases} 
\end{align}
We extend these binary operations to $(k+1)^V$ so that the operations are applied entry-wise:
for $\bx, \by \in (k+1)^V$, define $\bx \sqcup \by, \bx \sqcap \by \in (k+1)^V$ as
\begin{align}
    (\bx \sqcup \by)(j) &= x(j) \sqcup y(j) \quad (j \in V) \\
    (\bx \sqcap \by)(j) &= x(j) \sqcap y(j) \quad (j \in V) .
\end{align}
A function $f : (k+1)^V \to \bbR$ is \emph{$k$-submodular} if 
\begin{align}\label{eq:k-submod}
f(\bx) + f(\by) \geq f(\bx \sqcup \by) + f(\bx \sqcap \by)
\end{align}
for arbitrary $\bx, \by \in (k+1)^V$.
Ward and \v{Z}ivn\'{y}~\cite{Ward2016} showed that $k$-submodularity is equivalent to the following two conditions:
\begin{description}
    \item[Pairwise monotonicity] $\Delta_{j,i}f(\bx) + \Delta_{j,i'}f(\bx) \geq 0$ for $i \neq i'$, $\bx \in (k+1)^V$, and $j \notin \supp(\bx)$.
    \item[Orthant submodularity] $\Delta_{j,i}f(\bx) \geq \Delta_{j,i}f(\by)$ for $i$, $\bx\leq \by$, and $j \notin \supp(\by)$.
\end{description}
Define a partial order on $(k+1)^V$ by $\bx \leq \by$ if $\bx \sqcap \by = \bx$.
We say that $f : (k+1)^V \to \bbR$ is \emph{monotone} if $f(\bx) \leq f(\by)$ for arbitrary $\bx \leq \by$.

A vector $\bx \in (k+1)^V$ can be regarded as a $k$-subpartition of $V$.
That is, $(k+1)^V$ can be regarded as the set of $(X_1, \dots, X_k)$ ($X_i \subseteq V$, $X_i \cap X_{i'} = \emptyset$ if $i \neq i'$). 
The correspondence is given by $x(j) = i$ if and only if $j \in X_i$ (we conventionally regard that $x(j)=0$ if and only if $j$ is in none of $X_i$).
For $k = 1$, $k$-submodularity \eqref{eq:k-submod} is equivalent to submodularity, $f(X) + f(Y) \geq f(X \cup Y) + f(X \cap Y)$ for $X,Y \in 2^V$.
For $k = 2$, it is equivalent to bisubmodularity~\cite{Fujishige2005},
\begin{align}
    f(X_1, X_2) + f(Y_1, Y_2) \geq f((X_1\cup Y_1) \setminus (X_1 \cap Y_1), (X_2\cup Y_2)\setminus (X_2\cap Y_2)) + f(X_1 \cap Y_1, X_2 \cap Y_2),
\end{align}
for $(X_1, X_2), (Y_1, Y_2) \in 3^V$.
In \cite{Ward2016}, they showed that a submodular function $g: 2^V \to \bbR_+$ can be embedded into a bisubmodular function $f:3^V \to \bbR_+$ as
\begin{align}\label{eq:embed}
    f(S,T) = g(S) + g(V\setminus T) - g(T)
\end{align}
preserving the approximation ratio.
That is, if an $\alpha$-approximate maximizer of $f$ corresponds to an $\alpha$-approximate maximizer of $g$, for arbitrary $\alpha > 0$. 
This embedding demonstrates that our algorithm for online $k$-submodular maximization corresponds the algorithm of \cite{Roughgarden2018} for online submodular maximization.

A useful fact of $k$-submodular maximization is that there always exists a maximizer corresponding to a partition of $V$.
\begin{lemma}[\cite{Ward2016}]\label{lem:k-submod-partition}
    Let $k \geq 2$.
    For any $k$-submodular function $f$, there exists $\bo \in \argmax_{\bx \in (k+1)^V} f(\bx)$ such that $\supp(\bo) = V$.
\end{lemma}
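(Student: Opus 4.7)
The plan is to start from an arbitrary maximizer and repeatedly ``fill in'' zero coordinates one at a time, using pairwise monotonicity to guarantee that each fill-in preserves optimality. Since $V$ is finite, this process terminates with a maximizer of full support.

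More concretely, let $\bo \in \argmax_{\bx \in (k+1)^V} f(\bx)$ and suppose $\supp(\bo) \neq V$, so there exists some $j \in V \setminus \supp(\bo)$, i.e.\ $o(j) = 0$. Because $k \geq 2$, we can pick two distinct nonzero values $i, i' \in \{1, 2, \ldots, k\}$ (for instance $i=1$, $i'=2$). Since $j \notin \supp(\bo)$, pairwise monotonicity applies and gives
\begin{align}
\Delta_{j,i} f(\bo) + \Delta_{j,i'} f(\bo) \geq 0.
\end{align}
Hence at least one of the two summands is nonnegative; say $\Delta_{j,i} f(\bo) \geq 0$. Then $f(\bo + i\be_j) \geq f(\bo)$, so $\bo + i\be_j$ is again a maximizer, and its support is $\supp(\bo) \cup \{j\}$, which is strictly larger.

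The main (and only) step I would carry out is this iteration: initialize $\bo$ as any maximizer, and while $\supp(\bo) \neq V$, pick any $j \notin \supp(\bo)$, use pairwise monotonicity to find $i \in \{1,\ldots,k\}$ with $f(\bo + i\be_j) \geq f(\bo)$, and replace $\bo$ by $\bo + i\be_j$. The support strictly increases at each step, so after at most $n$ iterations we obtain a maximizer $\bo$ with $\supp(\bo) = V$.

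There is no real obstacle here: the role of $k \geq 2$ is exactly to make pairwise monotonicity nontrivially applicable (one needs two distinct nonzero labels to pair off), and the monotonicity is used only at the current point $\bo$, not along a path, so no submodular inequality is needed beyond the pairwise one. The only subtlety worth noting is that pairwise monotonicity is invoked at the current iterate (whose support keeps growing), which is legitimate because the condition $j \notin \supp(\bo)$ is what we explicitly maintain when choosing $j$.
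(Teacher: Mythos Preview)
Your proof is correct. The paper does not actually prove this lemma; it simply cites it from Ward and \v{Z}ivn\'{y}~\cite{Ward2016}, so there is no in-paper argument to compare against. Your fill-in argument via pairwise monotonicity is exactly the standard proof of this fact, and your observation that $k \geq 2$ is needed precisely to have two distinct nonzero labels available is the right diagnosis of the hypothesis.
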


\subsection{Blackwell's approachability theorem}
The celebrated Blackwell approachability theorem~\cite{Blackwell1956} is a powerful generalization of the von Neumann minimax theorem for two-player zero-sum games. 
Our presentation mostly follows \cite{Abernethy2011}.
Let $X \subseteq \bbR^m$ and $Y \subseteq \bbR^n$ be convex sets.
Let $\bl : X \times Y \to \bbR^k$ be a biaffine function, i.e, $\bl(\cdot, \by)$ is affine for any $\by \in Y$ and vice versa.
Let $S \subseteq \bbR^k$ be a closed convex set.
We call a tuple $(X,Y,\bl,S)$ a \emph{Blackwell instance}.
We say that:
\begin{itemize}
\item $S$ is \emph{satisfiable} if $\exists \bx \in X \forall \by \in Y: \bl(\bx,\by) \in S$.
\item $S$ is \emph{response-satisfiable} if $\forall \by \in Y \exists \bx \in X: \bl(\bx,\by) \in S$.
\item $S$ is \emph{halfspace-satisfiable} if an arbitrary hyperplane $H$ containing $S $ is satisfiable. 
\item $S$ is \emph{approachable} if there exists a sequence $(\bx_t)_{t \in [T]} \subseteq X$ such that for any sequence $(\by_t)_{t \in [T]} \subseteq Y$, $\dist\left(\frac{1}{T}\sum_{t \in [T]} \ell(\bx_t, \by_t), S \right) \to 0$ as $T \to \infty$.
\end{itemize}

\begin{theorem}[{The Blackwell approachability theorem~\cite{Blackwell1956}}]\label{thm:Blackwell}
    For a Blackwell instance $(I,J,\bl,S)$, the following conditions are equivalent:
    \begin{enumerate}
        \item $S$ is approachable.
        \item $S$ is halfspace-satisfiable.
        \item $S$ is response-satisfiable.
    \end{enumerate}
\end{theorem}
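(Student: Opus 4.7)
The plan is to prove the cycle $(1) \Rightarrow (3) \Rightarrow (2) \Rightarrow (1)$. The two short implications are consequences of convex separation and the minimax theorem respectively, while the real content is in the last step, which recovers Blackwell's original potential-based construction of the approaching strategy.

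For $(1) \Rightarrow (3)$ I would argue contrapositively. If $S$ is not response-satisfiable, then there is a $\by^* \in Y$ such that the convex image $A := \{\bl(\bx,\by^*) : \bx \in X\}$ is disjoint from $S$; assuming $X$ is compact, $A$ is compact and closed, so $\dist(A,S) > 0$. Letting the adversary play $\by^*$ in every round traps the average $\bar{\bz}_T := \frac{1}{T}\sum_{t \in [T]} \bl(\bx_t,\by_t)$ inside the convex set $A$ regardless of the player's strategy, so $\dist(\bar{\bz}_T, S) \geq \dist(A,S) > 0$ and $S$ is not approachable. For $(3) \Rightarrow (2)$, fix a halfspace $H = \{\bz : \langle \ba,\bz\rangle \leq b\}$ containing $S$. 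Response-satisfiability yields $\min_{\bx}\langle \ba, \bl(\bx,\by)\rangle \leq b$ for every $\by \in Y$. Because $(\bx,\by) \mapsto \langle \ba, \bl(\bx,\by)\rangle$ is biaffine and $X,Y$ are convex (with $X$ compact), Sion's minimax theorem lets me exchange $\min$ and $\max$, producing a single $\bx^* \in X$ with $\langle \ba, \bl(\bx^*,\by)\rangle \leq b$ for all $\by$, which is exactly halfspace-satisfiability of $H$.

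For $(2) \Rightarrow (1)$ I would carry out the Blackwell construction. The player maintains the running average $\bar{\bz}_{t-1}$, sets $\bp_t := \proj_S(\bar{\bz}_{t-1})$, and considers the halfspace
\begin{align}
H_t := \{\bz \in \bbR^k : \langle \bar{\bz}_{t-1} - \bp_t,\, \bz - \bp_t\rangle \leq 0\},
\end{align}
which contains $S$ by the variational characterization of orthogonal projection onto a closed convex set. Halfspace-satisfiability supplies an $\bx_t \in X$ with $\bl(\bx_t,\by) \in H_t$ for every $\by \in Y$, which the player plays. Writing $t\bar{\bz}_t = (t-1)\bar{\bz}_{t-1} + \bl(\bx_t,\by_t)$, subtracting $t\bp_t$, and expanding the squared norm gives
\begin{align}
t^2 \norm{\bar{\bz}_t - \bp_t}^2 = (t-1)^2 \dist(\bar{\bz}_{t-1},S)^2 + 2(t-1)\langle \bar{\bz}_{t-1} - \bp_t,\, \bl(\bx_t,\by_t) - \bp_t\rangle + \norm{\bl(\bx_t,\by_t) - \bp_t}^2.
\end{align}
Since $\bp_t \in S$, the left-hand side upper-bounds $t^2 \dist(\bar{\bz}_t,S)^2$; the middle inner product is nonpositive by the choice of $\bx_t$; and the last term is at most a constant $C$ because $\bl$, being biaffine on the compact $X \times Y$, is bounded. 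Telescoping the resulting recursion $t^2 \dist(\bar{\bz}_t,S)^2 \leq (t-1)^2 \dist(\bar{\bz}_{t-1},S)^2 + C$ gives $\dist(\bar{\bz}_t,S) = O(t^{-1/2})$, so $\bar{\bz}_t \to S$ and $S$ is approachable.

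The main obstacle is the last step: setting up the potential so that the cross term is killed by halfspace-satisfiability. One has to anchor the expansion at $\bp_t$ rather than at the true projection of $\bar{\bz}_t$ (so that the bound $\norm{\bar{\bz}_t - \bp_t}^2 \geq \dist(\bar{\bz}_t,S)^2$ is used in the correct direction), and one needs a uniform bound on $\norm{\bl}$ over $X \times Y$ to close the recurrence. The $(3) \Rightarrow (2)$ step also silently uses a standard compactness hypothesis on $X$ to apply Sion's minimax theorem, which is implicit in the usual framing of Blackwell's theorem.
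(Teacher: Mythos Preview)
The paper does not actually prove this theorem: it is stated as a classical result with a citation to Blackwell~\cite{Blackwell1956} and is used only as a black box (the constructive content the paper needs comes instead from Theorem~\ref{thm:Abernethy2011}). So there is no ``paper's own proof'' to compare against.

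That said, your proposal is essentially the standard proof of Blackwell's theorem and is correct in outline. The cycle $(1)\Rightarrow(3)\Rightarrow(2)\Rightarrow(1)$, the separation/compactness argument for the first arrow, the minimax swap for the second, and the projected-potential recursion for the third are all the classical ingredients. Two small points are worth tightening. First, when you bound $\norm{\bl(\bx_t,\by_t) - \bp_t}^2$ by a constant, you need not only that $\bl$ is bounded on $X\times Y$ but also that the projections $\bp_t$ stay bounded; this follows because the running averages $\bar{\bz}_{t-1}$ are themselves bounded by $\sup\norm{\bl}$ and hence $\norm{\bp_t} \leq \norm{\bar{\bz}_{t-1}} + \dist(\bar{\bz}_{t-1},S)$ is bounded as well, but you should say so. Second, the paper's definition of approachability is phrased as ``there exists a sequence $(\bx_t)$'' rather than ``there exists an adaptive strategy''; your construction produces $\bx_t$ depending on $\by_1,\dots,\by_{t-1}$, which is what is actually meant (and what the paper uses downstream), so you may want to note that you are reading the definition in the adaptive sense.
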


A \emph{halfspace oracle} $\caO$ is an oracle that takes a halfspace $H$ with $S \subseteq H$ as input and returns $\caO(H) = \bx_H \in X$.
A halfspace oracle is said to be \emph{valid} if $\bl(\bx_H, \by) \in H$ for any $\by \in Y$.
Note that the existence of a valid halfspace oracle is equivalent to the halfspace-satisfiability of $S$.
Even if a valid halfspace oracle exists, its efficient computation depends on the geometry of the feasible regions $X$ and $Y$.
If $X$ and $Y$ are polytopes, then a halfspace oracle can be constructed by linear programming (LP) as follows.

Let $H := \{\bz : \btheta^\top \bz \geq \beta \}$ be a halfspace.
Since $\bl$ is biaffine, $\btheta^\top \bl(\bx,\by) = \bx^\top P \by + \bb^\top\by + c$ for some matrix $P$, a vector $\bb$, and a constant $c$.
For computing a valid halfspace oracle, we can assume that $c = 0$ without loss of generality.
Then, $\bx_H$ is a response of a valid halfspace oracle if and only if $\bx_H \in \argmax_{\bx \in X} \min_{\by \in Y} (\bx^\top P \by + \bb^\top\by)$.
Let $Y = \{\by : A\by \geq \bc\}$. 
By the LP duality, the inner minimization $\min_{\by \in Y} (P^\top\bx+\bb)^\top \by$ is equivalent to the following dual:
\begin{align}
    \max \bc^\top \bq & \quad \text{s.t.} \quad A^\top \bq = P^\top\bx + \bb,\, \bq \geq \bfzero.
\end{align}
Since $X$ is also a polytope, after adding a constraint $\bx \in X$, we still have an LP.

\subsubsection{Online linear optimization and approachability}
The beauty of Blackwell's approachability theory is that it provides an algorithm for finding an approaching sequence, given a valid halfspace oracle.
Abernethy and Hazan~\cite{Abernethy2011} connected the approachability and OLO.
In OLO, we are given a fixed compact convex set $K \subseteq \bbR^k$.
In each $t$th round of OLO, a player selects $\bx_t \in K$.
Then an adversary reveals a vector $\bmf_t$ such that $\max_{\bx \in K} \abs{\bmf^\top \bx}\leq 1$.
The goal of the player is to minimize the regret:
\begin{align}
    \regret(\bmf_1, \dots, \bmf_T) = \sum_{t \in [T]} \bmf_t^\top \bx_t - \min_{\bx \in K} \sum_{t \in [T]} \bmf_t^\top \bx 
\end{align}
They devised an elegant algorithm for approachability, given a valid halfspace oracle $\caO$ and an algorithm $\caA$ for OLO, under the assumption that $S$ is a cone.

\begin{theorem}[Abernethy and Hazan~\cite{Abernethy2011}]\label{thm:Abernethy2011}
    Given a valid halfspace oracle $\caO$, a value oracle of $\bl$, a cone $S$, and an OLO algorithm $\caA$ on the polar cone $S^\circ$, there exists an algorithm $\caB$ that given a sequence $(\by_t)_{t\in[T]}$, computes a sequence $(\bx_t)_{t \in [T]}$ satisfying
    \begin{align}
         \dist\left( \frac{1}{T}\sum_{t\in[T]} \bl(\bx_t,\by_t), S \right) \leq \frac{1}{T}\regret_\caA(\bmf_1, \dots, \bmf_T), 
    \end{align}
         where $\bx_t = \caB(\by_1, \dots, \by_{t-1})$ and $\bmf_t = - \bl(\bx_t,\by_t)$ ($t \in [T]$).
\end{theorem}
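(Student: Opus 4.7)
The plan is to exploit the duality between distance to a cone and its polar: for a closed convex cone $S \subseteq \bbR^k$, one has
\begin{align}
    \dist(\bz, S) = \max\{ \bmf^\top \bz : \bmf \in S^\circ,\ \norm{\bmf} \leq 1 \}.
\end{align}
This identity, which follows from Moreau's decomposition (or from identifying $\dist(\cdot,S)$ as the Fenchel conjugate of $\sigma_S + \delta_{B_1}$ and noting $\sigma_S \equiv 0$ on $S^\circ$ for a cone), converts the claimed inequality into a purely inner-product statement against elements of $S^\circ \cap B_1$, which is exactly the set on which $\caA$ is assumed to operate.

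Given this reformulation, the algorithm $\caB$ is essentially forced. At round $t$, ask $\caA$ for its current iterate $\tilde{\bmf}_t \in S^\circ \cap B_1$, form the halfspace $H_t = \{\bz : \tilde{\bmf}_t^\top \bz \leq 0\}$, which contains $S$ by the definition of polar cone, query $\bx_t = \caO(H_t)$, and after observing $\by_t$ feed the loss vector $\bmf_t = -\bl(\bx_t,\by_t)$ back to $\caA$. Validity of the halfspace oracle yields $\tilde{\bmf}_t^\top \bl(\bx_t,\by_t) \leq 0$ for every possible $\by_t$, equivalently $\bmf_t^\top \tilde{\bmf}_t \geq 0$ for each $t$.

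Now invoke the OLO regret bound of $\caA$ against the revealed losses $\bmf_1,\dots,\bmf_T$:
\begin{align}
    \min_{\bmf \in S^\circ \cap B_1} \sum_{t \in [T]} \bmf_t^\top \bmf \;\geq\; \sum_{t \in [T]} \bmf_t^\top \tilde{\bmf}_t - \regret_\caA(\bmf_1,\dots,\bmf_T) \;\geq\; -\regret_\caA(\bmf_1,\dots,\bmf_T),
\end{align}
using the halfspace validity inequality in the last step. Since $\frac{1}{T}\sum_t \bl(\bx_t,\by_t) = -\frac{1}{T}\sum_t \bmf_t$, the duality identity gives
\begin{align}
    \dist\!\left(\tfrac{1}{T}\sum_{t} \bl(\bx_t,\by_t),\, S\right) = \max_{\bmf \in S^\circ \cap B_1} \left(-\tfrac{1}{T}\sum_{t} \bmf_t^\top \bmf\right) = -\tfrac{1}{T}\min_{\bmf \in S^\circ \cap B_1} \sum_{t} \bmf_t^\top \bmf \leq \tfrac{1}{T}\regret_\caA,
\end{align}
which is precisely the claim.

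The technical content is modest once the distance-to-cone duality is invoked; the only subtle points are sign conventions (lining up the direction of the defining halfspace, the polar cone, and the loss fed to $\caA$) and the implicit intersection of $S^\circ$ with the unit ball so that the OLO set is compact and its regret is well-defined. I expect those bookkeeping steps, rather than any genuine obstacle, to constitute the bulk of a rigorous write-up.
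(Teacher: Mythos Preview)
Your argument is correct and is essentially the same as the paper's: the paper does not give a standalone proof of this cited theorem, but the identical chain of reasoning---the distance-to-cone duality $\dist(\bz,S)=\max_{\btheta\in S^\circ\cap B_1}\btheta^\top\bz$, querying the halfspace oracle on $H_t=\{\bz:\btheta_t^\top\bz\le 0\}$ with $\btheta_t$ the OLO iterate, and the regret inequality using $\bmf_t^\top\btheta_t=-\btheta_t^\top\bl(\bx_t,\by_t)\ge 0$---appears verbatim in the proof of Lemma~\ref{lem:k-submod-selection}. Your write-up even makes the compactness point ($S^\circ\cap B_1$) explicit, which the paper leaves implicit in its choice of $K$.
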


We use \emph{online gradient descent}~\cite{Zinkevich2003} as a standard OLO algorithm. See Algorihm~\ref{alg:OGD} for the detail.
\begin{algorithm}[t]
    \caption{Online Gradient Descent for OLO~\cite{Zinkevich2003}}\label{alg:OGD}
    \begin{algorithmic}
        \REQUIRE{a compact convex set $K \subseteq \bbR^k$ and learning rate $\eta > 0$.}
        \STATE Let $\bx_0 \in K$ be an arbitrary point.
        \FOR{$t \in [T]$} 
        \STATE Play $\bx_t$ and observe $\bmf_t$.
        \STATE Let $\by_{t+1} = \bx_t - \eta\bmf_t$ and $\bx_{t+1} = \proj_K(\by_{t+1})$.
        \ENDFOR
    \end{algorithmic}
\end{algorithm}

\begin{theorem}[Zinkevich~\cite{Zinkevich2003}]\label{thm:OGD}
    Online gradient descent with learning rate $\eta > 0$ satisfies
    \begin{align}
        \regret(\bmf_1, \dots, \bmf_T) \leq \frac{1}{\eta} D^2 + \eta \sum_{t\in[T]} \norm{\bmf_t}^2,
    \end{align}
    where $D$ is the diameter of $K$.
\end{theorem}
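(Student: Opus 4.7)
The plan is to use the standard potential/telescoping argument for online gradient descent, with the squared Euclidean distance to a comparator as the potential. Fix an arbitrary comparator $\bx^\star \in K$; it suffices to bound $\sum_{t \in [T]} \bmf_t^\top (\bx_t - \bx^\star)$ uniformly over $\bx^\star$, since the objective $\bx \mapsto \bmf_t^\top \bx$ is linear (so instantaneous regret equals $\bmf_t^\top(\bx_t - \bx^\star)$ without needing a convexity step).

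The key step is to analyze how $\norm{\bx_t - \bx^\star}^2$ evolves. First I would invoke the nonexpansiveness of the Euclidean projection onto the convex set $K$: since $\bx^\star \in K$, we have $\norm{\bx_{t+1} - \bx^\star}^2 = \norm{\proj_K(\by_{t+1}) - \bx^\star}^2 \leq \norm{\by_{t+1} - \bx^\star}^2$. Expanding $\by_{t+1} = \bx_t - \eta\bmf_t$ yields
\begin{align*}
    \norm{\bx_{t+1} - \bx^\star}^2 \leq \norm{\bx_t - \bx^\star}^2 - 2\eta\, \bmf_t^\top(\bx_t - \bx^\star) + \eta^2 \norm{\bmf_t}^2,
\end{align*}
which after rearrangement gives the per-round bound
\begin{align*}
    \bmf_t^\top (\bx_t - \bx^\star) \leq \frac{\norm{\bx_t - \bx^\star}^2 - \norm{\bx_{t+1} - \bx^\star}^2}{2\eta} + \frac{\eta}{2}\norm{\bmf_t}^2.
\end{align*}

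Then I would sum this inequality over $t \in [T]$. The first term telescopes to at most $\frac{1}{2\eta}\norm{\bx_1 - \bx^\star}^2 \leq \frac{D^2}{2\eta}$ (using that both $\bx_1$ and $\bx^\star$ lie in $K$, which has diameter $D$, and discarding the nonpositive $-\norm{\bx_{T+1} - \bx^\star}^2$ term). The second term sums to $\frac{\eta}{2} \sum_{t \in [T]} \norm{\bmf_t}^2$. Taking the supremum over $\bx^\star \in K$ bounds the regret by $\frac{D^2}{2\eta} + \frac{\eta}{2}\sum_{t \in [T]} \norm{\bmf_t}^2$, which is tighter than and immediately implies the stated bound $\frac{1}{\eta}D^2 + \eta \sum_{t \in [T]} \norm{\bmf_t}^2$.

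This is a classical textbook argument, so there is no genuine obstacle; the only subtlety worth flagging is that one must apply the projection inequality with respect to the comparator $\bx^\star$ (which is feasible) rather than $\by_{t+1}$, and that the analysis works for arbitrary $\bx^\star \in K$ simultaneously, so the supremum over comparators can be taken only at the very end without loss.
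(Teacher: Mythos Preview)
Your argument is correct and is precisely the standard proof of Zinkevich's bound. The paper does not actually prove this theorem; it states it as a known result and cites \cite{Zinkevich2003}, so there is nothing to compare against beyond noting that your potential/telescoping derivation is the classical one underlying that citation.
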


\section{No $1/2$-regret algorithm for $k$-submodular maximization}\label{sec:nonmonotone}
In this section, we present our algorithm for online $k$-submodular maximization.

\subsection{$k$-submodular selection game}
Let us consider the following online learning problem, which we call a \emph{$k$-submodular selection game}.
In the $t$th round of the game, a player predicts a probability vector $\bp_t \in \Delta_k$.
An adversary's play is $\by_t = (\ba_t, \bb_t) \in Y$, where $Y$ is the set of $(\ba, \bb) \in [-1,1]^k \times [-1,1]^k$ such that 
\begin{align*}
    a(i) + a(i') &\geq 0 && (i \neq i')\\
    b(i) + b(i') &\geq 0 && (i \neq i')\\
    b(i) &\geq a(i)     && (i \in [k]).
\end{align*}
The feedback to the player is only $\bb_t$.
We denote the set of the adversary'play by $Y$. 
For a fixed $\bb$, we denote $Y(\bb) = \{\ba \in [-1,1]^k : (\ba, \bb) \in Y \}$.

\begin{definition}
    Let $\alpha > 0$.
    An online algorithm $\caA$ is an $\alpha$-selection algorithm for a $k$-submodular selection game with rate $g(k,T)$ if it satisfies 
    \begin{align}
    \max_{i^* \in [k]} \sum_{t \in [T]} a_t(i^*) -  \sum_{t\in [T]} \sum_{i\in[k]} (\alpha\cdot b_t(i) + a_t(i)) p_t(i)  \leq g(k,T),
    \end{align}
    where $g(k,T)$ is sublinear in $T$.
\end{definition}

Our main result is as follows.
\begin{theorem}\label{thm:k-submod-selection}
    There exists a $1$-selection algorithm for a $k$-submodular selection game with rate $g(k,T) = O(k\sqrt{T})$.
\end{theorem}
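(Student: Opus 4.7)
My plan is to encode the selection game as a Blackwell approachability instance and invoke the Abernethy--Hazan reduction to OLO. Take $X = \Delta_k$ and $Y$ as in the statement (both polytopes), and define the biaffine map $\bl : X \times Y \to \bbR^k$ by
\[
\ell_{i^*}(\bp, (\ba, \bb)) = \sum_{i \in [k]} (b(i) + a(i))\, p(i) - a(i^*) \quad (i^* \in [k]),
\]
with target cone $S = \bbR_+^k$. Writing $\bar{\bl} = \frac{1}{T}\sum_t \bl(\bp_t, \by_t)$, the selection-game regret equals $-T \min_{i^*} \bar{\ell}(i^*)$, which is at most $T \cdot \dist(\bar{\bl}, S)$: indeed, projection onto $\bbR_+^k$ acts coordinatewise as the positive part, so the negative part of any single coordinate of $\bar{\bl}$ is bounded by the Euclidean distance to $S$. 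Hence it suffices to approach $S$ at rate $O(k/\sqrt{T})$.

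The key structural step is response-satisfiability of $S$. Given $(\ba, \bb) \in Y$, pick $i_0 \in \argmax_i a(i)$. The pairwise constraint $a(i_0) + a(i) \geq 0$ combined with $a(i_0) \geq a(i)$ yields $a(i_0) \geq 0$, and the dominance constraint then gives $b(i_0) \geq a(i_0) \geq 0$. Taking $\bp = \be_{i_0}$, we obtain $\ell_{i^*}(\bp, (\ba,\bb)) = b(i_0) + a(i_0) - a(i^*) \geq b(i_0) \geq 0$ for every $i^*$. By Theorem~\ref{thm:Blackwell}, $S$ is halfspace-satisfiable, and since $X$ and $Y$ are explicit polytopes, a polynomial-time valid halfspace oracle $\caO$ is available via the LP construction from Section~\ref{sec:pre}.

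To obtain the concrete regret bound, run online gradient descent (Algorithm~\ref{alg:OGD}) on $K := \{\bz \in S^\circ : \norm{\bz} \leq 1\} = \{\bz \in \bbR_-^k : \norm{\bz} \leq 1\}$, a compact convex set of diameter at most $2$, with learning rate $\eta$, and combine it with $\caO$ via Theorem~\ref{thm:Abernethy2011}. Each coordinate of $\bl(\bp, \by)$ satisfies $\abs{\ell_{i^*}(\bp,\by)} \leq \abs{\sum_i (b(i)+a(i))p(i)} + \abs{a(i^*)} \leq 3$, so $\norm{\bmf_t}^2 \leq 9k$ for $\bmf_t = -\bl(\bp_t, \by_t)$. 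Theorem~\ref{thm:OGD} gives $\regret_\caA \leq 4/\eta + 9kT\eta$, which at $\eta = \Theta(1/\sqrt{kT})$ is $O(\sqrt{kT})$. Combined with the reduction above, the selection-game regret is $O(\sqrt{kT}) = O(k\sqrt{T})$, as claimed.

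The main obstacle is choosing $\bl$ and $S$ so that two properties hold simultaneously: (i) the selection-game regret reduces cleanly to $\dist(\bar{\bl}, S)$, and (ii) response-satisfiability follows directly from the three structural constraints defining $Y$. Once this design is in place, the LP-based halfspace oracle and the OGD-to-approachability pipeline from Theorem~\ref{thm:Abernethy2011} deliver the $O(k\sqrt{T})$ bound essentially by inspection.
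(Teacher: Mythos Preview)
Your overall architecture---encode the selection game as a Blackwell instance, verify response-satisfiability, and invoke the Abernethy--Hazan reduction with OGD on the polar cone---is exactly the paper's route (up to a sign flip of $\bl$ and $S$). Your direct response-satisfiability argument via $\bp=\be_{i_0}$ with $i_0\in\argmax_i a(i)$ is a nice, self-contained replacement for the citation to \cite[Theorem~2.1]{Iwata2016}; note that it uses the pairwise constraint $a(i_0)+a(i')\ge 0$ for some $i'\neq i_0$, so it is stated for $k\ge 2$.

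There is, however, a genuine gap. The selection game specifies that the player's \emph{only} feedback after round $t$ is $\bb_t$; the vector $\ba_t$ is never revealed. Your algorithm feeds $\bmf_t=-\bl(\bp_t,\by_t)$ to OGD, which requires $\ba_t$. Invoking Theorem~\ref{thm:Abernethy2011} as a black box therefore does not yield a valid algorithm for the game as defined. This is not a cosmetic restriction: in the downstream application (Theorem~\ref{thm:k-submod}), $\ba_t^{(j)}$ is the marginal-gain vector at $\bs_t^{(j-1)}$, which depends on the unknown optimum $\bo$, so the player genuinely cannot compute it.

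The paper closes this gap by replacing $\bl(\bp_t,\by_t)$ with a surrogate $\hat{\bl}_t$ computable from $\bb_t$ and $\bp_t$ alone, namely the coordinatewise worst case $\hat\ell_t(i)=\max_{\ba\in Y(\bb_t)}\ell_i(\bp_t,(\ba,\bb_t))$, and then re-runs the approachability-to-OLO argument with $\hat{\bl}_t$ in place of $\bl_t$ (Lemma~\ref{lem:k-submod-selection} and Algorithm~\ref{alg:k-submod-selection}). The inequality $\hat{\bl}_t\ge \bl(\bp_t,\by_t)$ (in the paper's sign convention) is what lets the distance bound survive the substitution. Your write-up needs an analogous step before the reduction to Theorem~\ref{thm:Abernethy2011} is legitimate.
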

To prove this theorem, we appeal to the Blackwell approachability theorem. 
First, we define a biaffine vector reward function $\bl$:
For $\bp \in \Delta_k$ and $\by = (\ba, \bb) \in Y$, let
\begin{align}
    \ell(\bp, \by)(i) =  a(i) - \sum_{i' \in [k]} (b(i')+a(i'))p(i').
\end{align}
Then, $S = \bbR_-^k$ is approachable in a Blackwell instance $(\Delta_k, Y, \bl, S)$  if and only if a $1$-selection algorithm exists for a $k$-submodular selection game.
We now show that $S$ is approachable.
By the Blackwell approachability theorem, it suffices to show that $S$ is response-satisfiable.
Indeed, this fact is already observed in \cite{Iwata2016}.

\begin{lemma}[{\cite[Theorem~2.1]{Iwata2016}}]
For a fixed adversary's play $(\ba, \bb)$, there exists $\bp \in \Delta_k$ that only depends on $\by$ and satisfies
\begin{align}
    \max_{i^* \in [k]} a(i^*) - \sum_{i\in[k]} (b(i) + a(i)) p(i) \leq 0.
\end{align}
\end{lemma}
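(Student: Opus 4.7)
The plan is to verify response-satisfiability by exhibiting an explicit $\bp$. I would introduce the shorthand $s(i) := a(i) + b(i)$ for $i \in [k]$, so that the target inequality rewrites as
\begin{align*}
\max_{i^* \in [k]} a(i^*) \leq \sum_{i \in [k]} s(i)\, p(i).
\end{align*}
Since $\sum_i s(i) p(i) \leq \max_i s(i)$ for any $\bp \in \Delta_k$, with equality when $\bp$ is supported on $\argmax_i s(i)$, it suffices to take $\bp = \be_{i^{**}}$ for any $i^{**} \in \argmax_{i \in [k]} s(i)$. This reduces the lemma to the scalar inequality $\max_i a(i) \leq \max_i s(i)$.

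To establish this reduced inequality, I would first add the pairwise constraints $a(i) + a(i') \geq 0$ and $b(i) + b(i') \geq 0$ (for $i \neq i'$) to obtain $s(i) + s(i') \geq 0$ for all $i \neq i'$, so at most one coordinate of $s$ is negative. Assuming $k \geq 2$, this forces $\max_i s(i) \geq 0$. I then case-split on the sign of $\max_i a(i)$: if $\max_i a(i) \leq 0$, the conclusion is immediate from $\max_i s(i) \geq 0$; otherwise, letting $i^* \in \argmax_i a(i)$, the constraint $b(i^*) \geq a(i^*)$ yields $s(i^*) = a(i^*) + b(i^*) \geq 2 a(i^*) > a(i^*)$, and hence $\max_i s(i) \geq s(i^*) > \max_i a(i)$.

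The main subtlety I anticipate is the implicit assumption $k \geq 2$: when $k = 1$ the pairwise constraints on $\ba$ and $\bb$ are vacuous, and the lemma genuinely fails whenever $b(1) < 0 \leq a(1)$. This is consistent with $k$-submodularity being a meaningful strengthening of submodularity only for $k \geq 2$, as already reflected in Lemma~\ref{lem:k-submod-partition}; the $k = 1$ case corresponds to ordinary submodular maximization, which is treated separately via the Roughgarden--Wang double-greedy-based algorithm.
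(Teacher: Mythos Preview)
The paper does not supply its own proof of this lemma; it simply cites \cite[Theorem~2.1]{Iwata2016}. Your argument is correct for $k \geq 2$: reducing to the scalar inequality $\max_i a(i) \leq \max_i s(i)$ via the vertex choice $\bp = \be_{i^{**}}$, and then dispatching that inequality by the case split on the sign of $\max_i a(i)$, is clean and fully rigorous. (Note that such a $\bp$ depends on $\ba$, which is fine for response-satisfiability since $\by = (\ba,\bb)$, even though in the actual selection game only $\bb$ is fed back to the player.)

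One small correction to your closing remark on $k=1$: the proposed counterexample ``$b(1) < 0 \leq a(1)$'' is infeasible, because the constraint $b(i) \geq a(i)$ is \emph{not} one of the pairwise constraints and hence is not vacuous when $k=1$. A valid witness to failure is any pair with $a(1) \leq b(1) < 0$, e.g.\ $a(1) = b(1) = -1$, for which the target inequality reads $-1 \leq -2$. Your broader point---that the lemma requires $k \geq 2$, consistently with Lemma~\ref{lem:k-submod-partition}, and that $k=1$ is handled via the bisubmodular embedding \eqref{eq:embed}---stands.
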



Therefore, the Blackwell approachability theorem implies the existence of a no-regret algorithm for a $k$-submodular selection game.
In particular, exploiting the result of \cite{Abernethy2011}, we obtain Algorithm~\ref{alg:k-submod-selection} for a $k$-submodular selection game.

\begin{algorithm}[t]
    \caption{A $1$-selection algorithm for a $k$-submodular selection game}\label{alg:k-submod-selection}
    \begin{algorithmic}[1]
        \REQUIRE An OLO algorithm $\caA$ with feasible region $K := \{ \btheta \in \bbR^k_+ : \norm{\btheta} \leq 1 \}$.
        \STATE Set up $\caA$.
        \FOR{$t \in [T]$}
        \STATE $\btheta_t \gets \caA(\bmf_1, \dots, \bmf_{t-1})$, where $\bmf_s := -\hat{\bl}_s$ ($s \in [t-1]$).
        \STATE Solve LP
        \begin{align}\label{eq:halfspace-LP}
            \bp_t \in \argmin_{\bp \in \Delta_k} \max_{\by \in Y} \btheta^\top\bl(\bp,\by)
        \end{align} 
        to obtain $\bp_t$. 
        \STATE Play $\bp_t$ and observe $\bb_t$.
        \STATE For $i \in [k]$, let $\hat{\bl}_t$ be a vector such that $\hat{\ell}_t(i) := \max_{\ba_t \in Y(\bb_t)} \ell(\bp_t, (\ba_t, \bb_t))(i)$.
        \ENDFOR 
    \end{algorithmic}
\end{algorithm}

\begin{lemma}\label{lem:k-submod-selection}
    Algorithm~\ref{alg:k-submod-selection} satisfies
    \begin{align}
        \max_{i^* \in [k]} \sum_{t \in [T]} a_t(i^*) - \sum_{t\in [T]} \sum_{i\in[k]} (b_t(i) + a_t(i)) p_t(i) 
        \leq \regret_\caA(\bmf_1, \dots, \bmf_T),
    \end{align}
    for any $(\ba_t, \bb_t) \in Y$ $(t \in [T])$,
    where $\regret_\caA (\bmf_1, \dots, \bmf_T) = \sum_{t\in[T]} \bmf_t^\top \btheta_t - \min_{\btheta \in K} \sum_{t \in [T]} \bmf_t^\top \btheta$ is the regret of the OLO algorithm $\caA$.
\end{lemma}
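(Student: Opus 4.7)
The plan is to imitate the Abernethy--Hazan reduction from Blackwell approachability to OLO, with a twist: since the player only observes $\bb_t$ (not $\ba_t$), the OLO algorithm $\caA$ is fed the coordinate-wise worst-case loss $\bmf_t = -\hat{\bl}_t$ instead of the true $-\bl(\bp_t,\by_t)$. I need to show that this substitution still suffices to bound the $1$-selection regret by $\regret_\caA$.

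The argument proceeds in three steps. First, since $\ba_t \in Y(\bb_t)$, we have $\hat{\ell}_t(i) \geq \ell(\bp_t,\by_t)(i)$ coordinate-wise, so $\max_{i^*}\sum_t \ell(\bp_t,\by_t)(i^*) \leq \max_{i^*}\sum_t \hat{\ell}_t(i^*)$; this reduces the goal to bounding the right-hand side. Second, I substitute the comparator $\btheta = \be_{i^*}$ (which lies in $K$, being nonnegative of unit norm) into the OLO regret definition and unpack $\bmf_t = -\hat{\bl}_t$ to obtain $\sum_t \hat{\ell}_t(i^*) - \sum_t \btheta_t^\top \hat{\bl}_t \leq \regret_\caA$; taking the maximum over $i^*$ then gives $\max_{i^*}\sum_t \hat{\ell}_t(i^*) \leq \regret_\caA + \sum_t \btheta_t^\top \hat{\bl}_t$. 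Third, I establish that $\btheta_t^\top\hat{\bl}_t \leq 0$ for every $t$, which chains the previous two inequalities into the lemma's bound.

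The hard part is the third step. The halfspace oracle~(\ref{eq:halfspace-LP}) only guarantees $\btheta_t^\top\bl(\bp_t,\by)\leq 0$ for each single $\by\in Y$, whereas $\btheta_t^\top\hat{\bl}_t$ is a weighted sum of coordinate-wise maxima whose maximizers $\ba^{(i)}\in Y(\bb_t)$ may differ across $i$. To bridge this gap I will exploit the explicit form $\ell(\bp,(\ba,\bb))(i) = a(i) - \sum_{i'}a(i')p(i') - \sum_{i'}b(i')p(i')$: the $\bb$-dependence $\sum_{i'}b(i')p(i')$ is a single scalar common to all coordinates, so after pulling it out the analysis reduces to showing $\sum_i \btheta_t(i)\max_{\ba\in Y(\bb_t)}[a(i)-\sum_{i'}a(i')p_t(i')] \leq \left(\sum_i\btheta_t(i)\right)\sum_{i'}b_t(i')p_t(i')$. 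I plan to verify this by combining the response-satisfiability provided by the preceding Iwata--Tanigawa--Yoshida lemma with the pairwise-monotonicity constraints defining $Y(\bb_t)$, in effect upgrading the halfspace oracle from the original game to an ``extended'' game in which the adversary plays distinct $\ba^{(i)}$ per coordinate. Carrying out this structural computation is where the technical content of the proof lies.
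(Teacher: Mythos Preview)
Your overall structure is exactly the paper's: bound the left-hand side by the distance of $\frac1T\sum_t\bl(\bp_t,\by_t)$ to $S=\bbR_-^k$, pass to $\hat{\bl}_t$ using $\hat{\bl}_t\ge\bl(\bp_t,\by_t)$ and $\btheta\ge\bfzero$, rewrite in terms of $\bmf_t=-\hat{\bl}_t$, and then invoke $\btheta_t^\top\hat{\bl}_t\le 0$ to absorb the extra term into the OLO regret. The paper carries out precisely these steps; for the last one it simply writes ``$\bmf_t^\top\btheta_t=-\btheta_t^\top\hat{\bl}_t\ge 0$ by the valid halfspace oracle property'' with no further argument.

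Where your proposal diverges is that you (correctly) flag this last step as nontrivial: the halfspace oracle only gives $\btheta_t^\top\bl(\bp_t,\by)\le 0$ for each single $\by\in Y$, whereas $\btheta_t^\top\hat{\bl}_t$ is a $\btheta_t$-weighted sum of coordinate-wise maxima whose maximizers $\ba^{(i)}\in Y(\bb_t)$ may differ across $i$. Your plan is to handle this by showing response-satisfiability of an ``extended'' game in which the adversary plays a separate $\ba^{(i)}$ per coordinate. Be aware of a gap in that plan: Blackwell's theorem applied to the extended game would only yield the \emph{existence} of some $\bp$ with $\btheta_t^\top\tilde{\bl}(\bp,\tilde\by)\le 0$ for all $\tilde\by$; it does not automatically certify the specific $\bp_t$ produced by the LP~\eqref{eq:halfspace-LP}, which was solved for the \emph{original} game. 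To close the argument you must either (i) show that the LP~\eqref{eq:halfspace-LP} already coincides with (or dominates) the halfspace-oracle LP for the extended game, or (ii) argue directly from the explicit form $\ell(\bp,(\ba,\bb))(i)=a(i)-\bp^\top\ba-\bp^\top\bb$ and the constraints of $Y$ that any minimizer $\bp_t$ of \eqref{eq:halfspace-LP} satisfies $\sum_i\theta_t(i)\max_{\ba\in Y(\bb_t)}[a(i)-\bp_t^\top\ba]\le \lVert\btheta_t\rVert_1\,\bp_t^\top\bb_t$. The paper does not address this point at all, so your more careful treatment is warranted; just make sure your extended-game reasoning ties back to the particular $\bp_t$ used by the algorithm.
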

\begin{proof}
    The proof mostly follows from \cite{Abernethy2011}, but we provide the full proof for the sake of completeness.
    Since $S$ is halfspace-satisfiable, LP~\eqref{eq:halfspace-LP} has a solution.
    Indeed, solving LP~\eqref{eq:halfspace-LP} simply computes an output of a valid halfspace oracle for a halfspace $H_t = \{\bx \in \bbR^k : \btheta_t^\top \bx \leq 0 \}$.
    Let us fix arbitrary $\by_t = (\ba_t, \bb_t) \in Y$ ($t \in [T]$).
    Then,
    \begin{align*}
        \dist\left(\frac{1}{T}\sum_{t \in [T]} \bl(\bp_t, \by_t), S \right) 
        &= \max_{\btheta \in K} \frac{1}{T}\sum_{t \in [T]} \bl(\bp_t, \by_t) ^\top \btheta \\
        &\leq \max_{\btheta \in K} \left[ \frac{1}{T}\sum_{t \in [T]} \hat{\bl}_t^\top \btheta \right] \\
        &= \max_{\btheta \in K} \left[ -\frac{1}{T}\sum_{t \in [T]} \bmf_t^\top \btheta \right] \\
        &\leq \frac{1}{T}\max_{\btheta \in K} \left[\sum_{t \in [T]}\bmf_t^\top \btheta_t  -\sum_{t \in [T]} \bmf_t^\top \btheta \right] 
        \tag{Since $\bmf_t^\top \btheta_t = -\btheta_t^\top \hat{\bl}_t \geq 0$ by the valid halfspace oracle property} \\
        &= \frac{\regret_\caA(\bmf_1, \dots, \bmf_T) }{T}. 
    \end{align*}
    Now the claim of the lemma is immediate from the following:
    \begin{align*}
        \frac{1}{T}\left[ \max_{i^* \in [k]} \sum_{t \in [T]} a_t(i^*) - \sum_{t\in [T]} \sum_{i\in[k]} (b_t(i) + a_t(i)) p_t(i) \right] 
        \leq \dist\left(\frac{1}{T}\sum_{t \in [T]} \bl(\bp_t, \by_t), S \right) 
    \end{align*}
\end{proof}

\begin{proof}[Proof of Theorem~\ref{thm:k-submod-selection}]
    We can use online gradient descent as an internal OLO algorithm $\caA$, which satisfies 
    \begin{align} 
        \regret_\caA(\bmf_1, \dots, \bmf_T) 
        \leq \frac{1}{\eta} D^2 + \eta \sum_{t\in[T]} \norm{\bmf_t}^2
        \leq \frac{1}{\eta} O(k) + \eta O(kT)
    \end{align} 
    where we used that $D = O(\sqrt{k})$ is the diameter of $\Delta_k$ and $\norm{\bmf_t}^2 = O(k)$ for $t \in [T]$ in the second inequality.
    Setting $\eta = O(1/\sqrt{T})$, we obtain the regret bound $O(k\sqrt{T})$.
    Combined with Lemma~\ref{lem:k-submod-selection}, we see that Algorithm~\ref{alg:k-submod-selection} is a $1$-selection algorithm with rate $O(k\sqrt{T})$.
\end{proof}

\begin{remark}
    Since Algorithm~\ref{alg:k-submod-selection} is deterministic if we use online gradient descent as an internal OLO algorithm, the guarantee in Theorem~\ref{thm:k-submod-selection} holds even for an adaptive adversary.
\end{remark}

\subsection{Main algorithm}
Now we present our main algorithm for online $k$-submodular maximization.

\begin{algorithm}[h]
    \caption{No $1/(\alpha+1)$-regret algorithm for $k$-submodular maximziation}\label{alg:k-submod}
\begin{algorithmic}[1]
    \REQUIRE $\alpha$-selection algorithms $\caA_j$ for a $k$-submodular selection game ($j \in [n]$).
    \STATE Set up $\caA_j$ ($j \in [n]$).
    \FOR{$t = 1, \dots, T$}
    \STATE Set $\bx_t^{(0)} := \bfzero$.
    \FOR{$j \in [n]$}
    \STATE Receive $\bp_t^{(j)} \in \Delta_k$ from $\caA_j$.
    \STATE Sample $i \in [k]$ from the probability distribution $\bp_t^{(j)}$, and set $\bx_t^{(j)}:= \bx_t^{(j-1)}+ i\be_j$. 
    \ENDFOR
    \STATE Play $\bx_t = \bx_t^{(n)}$ and receive $f_t$.
    \FOR{$j \in [n]$}
    \STATE Feedback $b_{t}^{(j)}(i) := \Delta_{j,i}f_t(\bx^{(j-1)})$ ($i \in [k]$) to $\caA_j$.
    \ENDFOR
    \ENDFOR
\end{algorithmic}
\end{algorithm}

\begin{theorem}\label{thm:k-submod}
    Given $\alpha$-selection algorithms $\caA_j$ ($j \in [n]$) for $k$-submodular selection games with rate $g(k,T)$, Algorithm~\ref{alg:k-submod} achieves
    \begin{align}
        \E \left[ \frac{1}{\alpha+1}\max_{\bo \in (k+1)^V}\sum_{t\in[T]} f_t(\bo) - \sum_{t\in[T]} f_t(\bx_t) \right] \leq n g(k,T),
    \end{align} 
    where the expectation is taken under the randomness in Algorithm~\ref{alg:k-submod}.
\end{theorem}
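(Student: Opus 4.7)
The plan is to run the $n$ selection-game subalgorithms along a ``hybrid'' path from a fixed comparator $\bo$ down to the algorithm's play $\bx_t$, to feed each $\caA_j$ a carefully chosen adversary pair $(\ba_t^{(j)}, \bb_t^{(j)})$ derived from $f_t$, and then to invoke the $\alpha$-selection guarantee with $i^* = o(j)$. First, fix any $\bo \in (k+1)^V$; by Lemma~\ref{lem:k-submod-partition} applied to $\sum_t f_t$, we may assume $\supp(\bo) = V$ so that $o(j) \in [k]$ for every $j$. For each round $t$ and each $j \in \{0, \dots, n\}$, define the hybrid vector $\bo_t^{(j)}$ by $\bo_t^{(j)}(\ell) = x_t(\ell)$ for $\ell \leq j$ and $\bo_t^{(j)}(\ell) = o(\ell)$ for $\ell > j$, so that $\bo_t^{(0)} = \bo$ and $\bo_t^{(n)} = \bx_t$. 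Letting $\bz_t^{(j)}$ denote $\bo_t^{(j)}$ with the $j$-th coordinate zeroed out, one has $\bx_t^{(j-1)} \leq \bz_t^{(j)}$ and $j \notin \supp(\bz_t^{(j)})$, and the identities $\bo_t^{(j-1)} = \bz_t^{(j)} + o(j) \be_j$, $\bo_t^{(j)} = \bz_t^{(j)} + x_t(j) \be_j$ give the pathwise telescope
\begin{equation*}
 f_t(\bo) - f_t(\bx_t) = \sum_{j=1}^n \bigl(\Delta_{j, o(j)} f_t(\bz_t^{(j)}) - \Delta_{j, x_t(j)} f_t(\bz_t^{(j)})\bigr).
\end{equation*}

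Next, feed $\caA_j$ the adversary play $a_t^{(j)}(i) := \Delta_{j, i} f_t(\bz_t^{(j)})$ alongside the feedback $b_t^{(j)}(i) := \Delta_{j, i} f_t(\bx_t^{(j-1)})$ already prescribed in Algorithm~\ref{alg:k-submod}. Membership in $Y$ follows directly from $k$-submodularity: pairwise monotonicity of $f_t$ gives the sign constraints on $\ba_t^{(j)}$ and $\bb_t^{(j)}$, orthant submodularity applied to $\bx_t^{(j-1)} \leq \bz_t^{(j)}$ gives $b_t^{(j)}(i) \geq a_t^{(j)}(i)$, and $f_t \in [0, 1]$ forces the range $[-1, 1]$. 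Applying the $\alpha$-selection guarantee to $\caA_j$ with $i^* = o(j)$, summing over $j$, and invoking the remark after Theorem~\ref{thm:k-submod-selection} so that the pathwise bound survives against the randomised, history-dependent stream of $(\ba_t^{(j)}, \bb_t^{(j)})$ produced by Algorithm~\ref{alg:k-submod}, I obtain in expectation
\begin{equation*}
 \E\Bigl[\sum_{t,j} a_t^{(j)}(o(j)) - \sum_{t,j,i} \bigl(\alpha b_t^{(j)}(i) + a_t^{(j)}(i)\bigr) p_t^{(j)}(i)\Bigr] \leq n\, g(k, T).
\end{equation*}

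The last step is to translate each expectation into $f_t$-values. Conditioning on the filtration $\mathcal{F}_{t, j-1}$ comprising all past rounds and the first $j-1$ samples of round $t$, both $\bp_t^{(j)}$ and $\bz_t^{(j)}$ are measurable while $i_j \sim \bp_t^{(j)}$ is drawn independently of everything else, so $\E[\sum_i p_t^{(j)}(i) a_t^{(j)}(i) \mid \mathcal{F}_{t,j-1}] = \E[\Delta_{j, x_t(j)} f_t(\bz_t^{(j)}) \mid \mathcal{F}_{t,j-1}]$, and the analogous identity holds with $b$ replacing $a$ and $\bx_t^{(j-1)}$ replacing $\bz_t^{(j)}$. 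Combining with the hybrid telescope above and the trivial telescope $\sum_j \Delta_{j, x_t(j)} f_t(\bx_t^{(j-1)}) = f_t(\bx_t) - f_t(\bfzero)$ collapses the displayed inequality to $\sum_t f_t(\bo) - (1 + \alpha) \E[\sum_t f_t(\bx_t)] + \alpha \sum_t f_t(\bfzero) \leq n\, g(k, T)$. Dropping the nonnegative term $\alpha \sum_t f_t(\bfzero)$, dividing by $1 + \alpha$, and taking the maximum over $\bo$ (which is deterministic for an oblivious adversary) then gives the claimed bound.

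The main obstacle is the conditional-expectation step above: one must verify that $a_t^{(j)}(i)$, as a function of $i$, depends only on the filtration $\mathcal{F}_{t, j-1}$ (through $\bz_t^{(j)}$), so that averaging against $\bp_t^{(j)}$ commutes with the subsequent sampling of $i_j$. Once this is in hand, the $Y$-membership check, the two telescopes, and the final algebraic rearrangement are essentially routine.
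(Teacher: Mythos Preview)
Your proof is correct and follows essentially the same route as the paper: the hybrid vectors $\bo_t^{(j)}$ you define coincide with the paper's $(\bo \sqcup \bx_t^{(j)}) \sqcup \bx_t^{(j)}$, your $\bz_t^{(j)}$ is the paper's $\bs_t^{(j-1)}$, and the choices of $\ba_t^{(j)}, \bb_t^{(j)}$, the $Y$-membership check, the invocation of the selection guarantee with $i^* = o(j)$, and the two telescopes are all identical. The only difference is presentational: you are more explicit about the filtration and the conditional-expectation step (and you appeal to the determinism remark rather than the paper's one-line ``conditioned on $\bx_t^{(j-1)}$''), but these amount to the same argument.
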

\begin{proof}
    Let $\bo \in (k+1)^V$ be an optimal solution such that $\supp(\bo) = [n]$ (such an optimal solution exists by Lemma~\ref{lem:k-submod-partition}).
    For each $t \in [T]$ and $j = 0, 1, \dots, n$, let $\bo_t^{(j)} := (\bo \sqcup \bx_t^{(j)}) \sqcup \bx_t^{(j)}$.
    Note that $\bo_t^{(0)} = \bo$ and $\bo_t^{(n)} = \bx_t^{(n)}$.
    Let $\bs_t^{(j-1)}$ be a vector obtained by setting the $j$th element of $\bo_t^{(j-1)}$ to $0$ for $j \in [n]$.
    Define $a_t^{(j)}(i) := \Delta_{j,i} f_t(\bs_t^{(j-1)})$ and $b_t^{(j)}(i) := \Delta_{j,i} f_t(\bx_t^{(j-1)})$.
    By orthant submodularity and pairwise monotonicity, we have
\begin{align*}
    a_t^{(j)}(i) + a_t^{(j)}(i') &\geq 0 && (i \neq i')\\
    b_t^{(j)}(i) + b_t^{(j)}(i') &\geq 0 && (i \neq i')\\
    b_t^{(j)}(i) &\geq a_t^{(j)}(i)     && (i \in [k]).
\end{align*}
Therefore, $\bb_t^{(j)}$ is valid feedback to $\caA_j$ ($j \in [n]$).
    Let us fix $j \in [n]$ and let $i^* := o(j)$.
    Note that $i^* \in [k]$, since $\supp(\bo) = [n]$.
    Since $\caA_j$ is an $\alpha$-selection algorithm, we have 
    \begin{align}\label{eq:1-selection-guarantee}
    \sum_{t \in [T]}\sum_{i\in[k]} (a_t^{(j)}(i^*) - a_t^{(j)}(i))p_t^{(j)}(i)
    \leq \alpha \sum_{t\in [T]} \sum_{i\in[k]} b_t^{(j)}(i)p_t^{(j)}(i)  +  g(k,T),
    \end{align}
    conditioned on $\bx_t^{(j-1)}$ ($t \in [T]$).
    Taking the expectation on $\bx_t^{(j-1)}$ ($t \in [T]$), we obtain
    \begin{align}
    \E\left[ \sum_{t \in [T]} (f_t(\bo_t^{(j-1)}) - f_t(\bo_t^{(j)})) \right]
    \leq \alpha \E\left[ \sum_{t\in [T]} (f_t(\bx_t^{(j)}) - f_t(\bx_t^{(j-1)})) \right]  +  g(k,T).
    \end{align}
    Summing these inequalities for $j \in [n]$, we arrive at
    \begin{align*}
    \E\left[ \sum_{t \in [T]} (f_t(\bo) - f_t(\bx_t)) \right]
    &\leq \alpha\E\left[ \sum_{t\in [T]} (f_t(\bx_t) - f_t(\bfzero)) \right]  + ng(k,T)\\
    &\leq \alpha\E\left[ \sum_{t\in [T]} f_t(\bx_t) \right]  + n g(k,T) 
     \tag{since $f_t(\bfzero)\geq 0$ ($t \in [T]$)},
    \end{align*}
    which proves the theorem.
\end{proof}

Combining this theorem with Lemma~\ref{lem:k-submod-selection}, we obtain the main result.
\begin{corollary}
    There exists a polynomial-time algorithm for online $k$-submodular maximization whose $1/2$-regret is bounded by $O(kn\sqrt{T})$.
\end{corollary}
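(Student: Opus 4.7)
The corollary follows by directly combining the two main technical results of the section, so the proposal is essentially bookkeeping. First, I would instantiate Theorem~\ref{thm:k-submod-selection} to obtain, for each coordinate $j \in [n]$, an independent copy $\caA_j$ of the $1$-selection algorithm for the $k$-submodular selection game with rate $g(k,T) = O(k\sqrt{T})$. These serve as the black-box subroutines required by Algorithm~\ref{alg:k-submod}.

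Next, I would feed these $\caA_j$'s into Theorem~\ref{thm:k-submod} with $\alpha = 1$. Theorem~\ref{thm:k-submod} then guarantees
\begin{align*}
\E\left[ \tfrac{1}{2}\max_{\bo \in (k+1)^V}\sum_{t\in[T]} f_t(\bo) - \sum_{t\in[T]} f_t(\bx_t) \right] \leq n \cdot g(k,T) = O(nk\sqrt{T}),
\end{align*}
which is exactly the claimed $1/2$-regret bound.

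The remaining thing to check is polynomial running time. Per round and per coordinate $j$, Algorithm~\ref{alg:k-submod-selection} performs one online gradient descent update on the compact convex set $K = \{\btheta \in \bbR_+^k : \norm{\btheta}\leq 1\}$ (a projection onto a nonnegative Euclidean ball, computable in $O(k)$ time), and solves the LP in~\eqref{eq:halfspace-LP}. Because $\bl$ is biaffine and $\Delta_k$, $Y$ are explicit polytopes with $O(k)$ facets, this LP has size polynomial in $k$ and can be solved in polynomial time by the LP-dualization recipe described after Theorem~\ref{thm:Blackwell}. Sampling $i$ from $\bp_t^{(j)}$ and computing the marginal feedback $b_t^{(j)}(i) = \Delta_{j,i} f_t(\bx_t^{(j-1)})$ each require $O(k)$ time and one value-oracle call to $f_t$. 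Since Algorithm~\ref{alg:k-submod} loops over $n$ coordinates inside each of $T$ rounds, the total runtime is polynomial in $n$, $k$, and $T$.

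There is no real obstacle here; the work has already been done in Theorems~\ref{thm:k-submod-selection} and~\ref{thm:k-submod}. The only point that deserves a brief remark in the write-up is that the polynomial running time depends on solvability of the halfspace LP~\eqref{eq:halfspace-LP}, which is justified by the polyhedral structure of $\Delta_k$ and $Y$ rather than inherited from the two invoked theorems.
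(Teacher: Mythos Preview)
Your proposal is correct and matches the paper's approach exactly: the corollary is simply Theorem~\ref{thm:k-submod} instantiated with $\alpha=1$ and the $1$-selection algorithm of Theorem~\ref{thm:k-submod-selection}, giving the bound $n\cdot O(k\sqrt{T}) = O(nk\sqrt{T})$, and your added justification of polynomial running time is a welcome elaboration the paper leaves implicit. One tiny inaccuracy worth fixing in the write-up: the polytope $Y$ has $O(k^2)$ defining inequalities (from the pairwise constraints $a(i)+a(i')\ge 0$ and $b(i)+b(i')\ge 0$), not $O(k)$, though this of course does not affect the polynomial-time conclusion.
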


\begin{remark}
    Since Algorithm~\ref{alg:k-submod-selection} is deterministic, \eqref{eq:1-selection-guarantee} is valid for an adaptive adversary.
    Therefore, the regret bound of Algorithm~\ref{alg:k-submod} holds for an adaptive adversary.
    Note that a selection algorithm used in Roughgarden and Wang~\cite{Roughgarden2018} is randomized;
    therefore it requires different analysis for an adaptive adversary.
\end{remark}

\section{Online monotone $k$-submodular maximization}\label{sec:monotone}
To demonstrate the flexibility of our method with the Blackwell approachability theory, we present a no $\frac{k}{2k-1}$-regret algorithm for online monotone $k$-submodular maximization.
To this end, we define a modified version of a $k$-submodular selection game, which we call a \emph{monotone $k$-submodular selection game}.
The only difference in the monotone case is that the set of the adversary's play is further restricted to  $Y_+ := Y \cap (\bbR_+^k \times \bbR_+^k)$, which means that $\by_t \geq \bfzero$.

\begin{lemma}
    There exists a $(1-1/k)$-selection algorithm for a monotone $k$-submodular selection game with rate $g(k,T) = O(k\sqrt{T})$.
\end{lemma}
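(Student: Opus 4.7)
The plan is to follow the same Blackwell-approachability template as in the proof of Theorem~\ref{thm:k-submod-selection}, adjusting the reward to incorporate the monotone factor $1-\tfrac{1}{k}$. Specifically, define the biaffine vector reward
\begin{align*}
\ell(\bp,\by)(i) := a(i) - \sum_{i'\in[k]}\!\left(\left(1-\tfrac{1}{k}\right)b(i') + a(i')\right)p(i')
\end{align*}
so that a $(1-1/k)$-selection algorithm for the monotone selection game is precisely an approachability algorithm for the Blackwell instance $(\Delta_k, Y_+, \bl, \bbR_-^k)$.

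The key preparatory step is response-satisfiability of $S=\bbR_-^k$: for every $\by=(\ba,\bb)\in Y_+$ there exists $\bp\in\Delta_k$ with $\ell(\bp,\by)\le\bfzero$, i.e.,
\begin{align*}
\sum_i\left(\left(1-\tfrac{1}{k}\right)b(i) + a(i)\right)p(i) \ge \max_i a(i).
\end{align*}
Since $Y_+$ enforces $\ba,\bb\ge\bfzero$, the point mass $\bp=\be_{i^*}$ at $i^*\in\argmax_i a(i)$ already gives $\sum_i a(i)p(i)=\max_i a(i)$, and the nonnegative term $(1-1/k)b(i^*)\ge 0$ supplies the slack. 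By Blackwell's theorem (Theorem~\ref{thm:Blackwell}), halfspace-satisfiability follows, and a valid halfspace oracle is computable by the LP construction of Section~\ref{sec:pre} since $Y_+$ remains a polytope cut out by polynomially many linear constraints.

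The rest of the argument mirrors Theorem~\ref{thm:k-submod-selection} verbatim: I would invoke Algorithm~\ref{alg:k-submod-selection} with online gradient descent on the polar cone $K=\{\btheta\in\bbR_+^k:\|\btheta\|\le 1\}$ as the inner OLO algorithm, and apply the direct analog of Lemma~\ref{lem:k-submod-selection} to bound the selection-game regret by the OLO regret. The parameter estimates $D=O(\sqrt{k})$ and $\|\bmf_t\|^2=O(k)$ are unchanged in the restricted regime $Y_+$, so OGD with $\eta=\Theta(1/\sqrt{T})$ achieves $O(k\sqrt{T})$. The main technical subtlety is confirming that the pointwise inequality $\btheta_t^\top\hat{\bl}_t\le 0$ driving the analog of Lemma~\ref{lem:k-submod-selection} continues to hold with the modified coefficient $1-1/k$ and adversarial domain $Y_+$; this relies on the slack left in the halfspace LP (of the form $p(i)\ge\tfrac{k}{2k-1}\cdot\theta(i)/\Theta$ with leftover uniform mass), and is precisely what pins down the threshold $\alpha=1-1/k$.
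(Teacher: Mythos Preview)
Your proposal is correct and follows the same Blackwell-approachability-plus-OLO template as the paper. The one substantive difference is in establishing response-satisfiability of $S=\bbR_-^k$: the paper simply invokes \cite[Theorem~2.2]{Iwata2016}, whereas you give a one-line direct argument via the point mass $\bp=\be_{i^*}$ at $i^*\in\argmax_i a(i)$, using only $\bb\ge\bfzero$ from $Y_+$. This is legitimate and strictly more elementary; indeed it shows response-satisfiability for every $\alpha\ge 0$, not just $\alpha=1-1/k$. That observation, however, undercuts your closing remark: since response-satisfiability (hence halfspace-satisfiability, hence feasibility of the LP \eqref{eq:halfspace-LP}) holds for all $\alpha\ge 0$, nothing in the approachability argument itself ``pins down'' the value $1-1/k$; that choice is made externally so that $1/(\alpha+1)=k/(2k-1)$ in Theorem~\ref{thm:k-submod}. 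The paper does not isolate any threshold in this lemma either and, like you, simply asserts that the analogue of Lemma~\ref{lem:k-submod-selection} (including $\btheta_t^\top\hat{\bl}_t\le 0$) carries over verbatim.
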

\begin{proof}
Again, we use the Blackwell approachability theorem.
We define a slightly modified vector reward function $\bl'$ as follows:
\begin{align}
    \ell'(\bp, \by)(i) =  a(i) - \sum_{i' \in [k]} (\alpha\cdot b(i')+a(i')) p(i'),
\end{align}
where $\alpha = 1-1/k$.
It suffices to show that $S = \bbR_-^k$ is response-satisfiable for a Blackwell instance $(X, Y_+, \bl', S)$.
In \cite[Theorem~2.2]{Iwata2016}, it is shown that for fixed $\by = (\ba, \bb) \in Y_+$, there exists $\bp \in \Delta_k$ such that $\bl'(\bp, \by) \leq \bfzero$.
Therefore, there exists an online algorithm for producing an approaching sequence.
Indeed, such an algorithm can be constructed by a slight modification of Algorithm~\ref{alg:k-submod-selection}:
instead of $\bl$ and $Y$, we use $\bl'$ and $Y_+$, respectively.
It is easy to see that the modified algorithm produces a sequence $\bp_t$ ($t\in[T]$) with the same guarantee as in Lemma~\ref{lem:k-submod-selection}:
\begin{align}
        \max_{i^* \in [k]} \sum_{t \in [T]} a_t(i^*) - \sum_{t\in [T]} \sum_{i\in[k]} (\alpha\cdot b_t(i) + a_t(i)) p_t(i) 
        \leq \regret_\caA(\bmf_1, \dots, \bmf_T),
\end{align}
    for any $(\ba_t, \bb_t) \in Y$ $(t \in [T])$, where $\caA$ is an internal OLO algorithm.
    Again, using online gradient descent as $\caA$, we obtain the same bound as before, which completes the proof.
\end{proof}

Combining this result with Theorem~\ref{thm:k-submod}, we obtain the following.
\begin{theorem}
    There exists a polynomial-time algorithm for online monotone $k$-submodular maximization whose $\frac{k}{2k-1}$-regret is bounded by $O(kn\sqrt{T})$.
\end{theorem}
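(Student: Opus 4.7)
The plan is to apply Theorem~\ref{thm:k-submod} essentially as a black box with $\alpha = 1-1/k$, plugging in the $(1-1/k)$-selection algorithm for monotone $k$-submodular selection games from the preceding lemma as the internal algorithm $\caA_j$ for each $j \in [n]$. Since $1/(\alpha+1) = 1/((k-1)/k + 1) = k/(2k-1)$ and the rate of the monotone selection algorithm is $g(k,T) = O(k\sqrt{T})$, this directly yields a $k/(2k-1)$-regret bound of $n g(k,T) = O(kn\sqrt{T})$.

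The one step that requires inspection is verifying that the feedback passed to each $\caA_j$ in Algorithm~\ref{alg:k-submod} constitutes a valid adversary play for a \emph{monotone} $k$-submodular selection game, i.e., lies in $Y_+$ rather than only in $Y$. Revisiting the proof of Theorem~\ref{thm:k-submod}, the auxiliary vectors $\ba_t^{(j)}$ and the feedback vectors $\bb_t^{(j)}$ have entries $\Delta_{j,i}f_t(\bs_t^{(j-1)})$ and $\Delta_{j,i}f_t(\bx_t^{(j-1)})$, respectively. When $f_t$ is monotone, every marginal $\Delta_{j,i}f_t(\cdot)$ is non-negative, so $\ba_t^{(j)}, \bb_t^{(j)} \geq \bfzero$; the remaining constraints defining $Y$ (pairwise sums non-negative, and $b_t^{(j)}(i) \geq a_t^{(j)}(i)$) follow from pairwise monotonicity and orthant submodularity exactly as in the nonmonotone case. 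Hence $(\ba_t^{(j)}, \bb_t^{(j)}) \in Y_+$, and the guarantee of the monotone $(1-1/k)$-selection algorithm applies.

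With this validity established, the telescoping calculation in the proof of Theorem~\ref{thm:k-submod} goes through verbatim with $\alpha = 1-1/k$, giving
\begin{align*}
\E\left[\sum_{t \in [T]} (f_t(\bo) - f_t(\bx_t))\right] \leq \left(1 - \tfrac{1}{k}\right) \E\left[\sum_{t \in [T]} f_t(\bx_t)\right] + n g(k,T),
\end{align*}
which rearranges to the claimed bound. I do not anticipate a real obstacle: the argument is a modular plug-in of the monotone selection algorithm into the framework already set up for the nonmonotone setting, with monotonicity of $f_t$ being exactly the additional hypothesis needed to land in $Y_+$.
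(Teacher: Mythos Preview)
Your proposal is correct and follows essentially the same approach as the paper's own proof: invoke Theorem~\ref{thm:k-submod} with the $(1-1/k)$-selection algorithm from the preceding lemma, and use monotonicity of $f_t$ to ensure $\ba_t^{(j)}, \bb_t^{(j)} \geq \bfzero$ so that the feedback lies in $Y_+$. Your write-up is in fact more detailed than the paper's, which simply notes these two points and concludes.
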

\begin{proof}
    We use the same notation as in the proof of Theorem~\ref{thm:k-submod}.
    Since $f_t$ is monotone ($t \in [T]$), we have $\ba_t^{(j)}, \bb_t^{(j)} \geq \bfzero$ ($t\in[T]$, $j \in [n]$).
    Therefore, $\bb_t^{(j)}$ is valid feedback to an algorithm for a monotone $k$-submodular selection game.
    Since $\alpha = 1-1/k$, we have the same bound for the $\frac{k}{2k-1}$-regret.
\end{proof}

\subsection*{Acknowledgement}
The author thanks Takanori Maehara, Shinsaku Sakaue, Yuichi Yoshida, and Kaito Fujii for valuable discussions.
The author also thanks Tim Roughgarden and Joshua R. Wang for sharing a draft of \cite{Roughgarden2018}.
This work was supported by ACT-I, JST.

\bibliographystyle{IEEEtranS}
\bibliography{main.bib}
\end{document}